\newtheorem{thm}{Theorem}
\begin{document}

\begin{center}
{\Large\bf Projection based lower bounds of concurrence for multipartite quantum systems}
\end{center}

\begin{center}
\rm  Hui Zhao,$^1$ \  MeiMing Zhang,$^1$ \ Shao-Ming Fei$^{2,3}$  \ and Naihuan Jing,$^{4,5}$
\end{center}

\begin{center}
\begin{footnotesize} \sl
$^1$ College of Applied Sciences, Beijing University of Technology, Beijing 100124, China

$^2$ School of Mathematical Sciences, Capital Normal University, Beijing 100048, China

$^3$ Max-Planck-Institute for Mathematics in the Sciences, 04103 Leipzig, Germany

$^4$ Department of Mathematics, North Carolina State University, Raleigh, NC 27695, USA

$^5$ Department of Mathematics, Shanghai University, Shanghai 200444, China
\end{footnotesize}
\end{center}

\vspace*{2mm}

\begin{center}
\begin{minipage}{15.5cm} {\bf Abstract:}
\parindent 20pt\footnotesize
We study the concurrence of arbitrary-dimensional multipartite quantum states.
Analytical lower bounds of concurrence for tripartite quantum states are derived by projecting high-dimensional states to $2\otimes 2\otimes 2$ substates.
The results are then generalized to arbitrary multipartite quantum systems. Furthermore, the scheme enables us obtain lower bounds of
concurrence for arbitrary four-partite quantum states by projecting high-dimensional states to arbitrary given lower dimensional substates.
By detailed examples we show that our results improve the existing lower bounds of concurrence.
\end{minipage}
\end{center}

\begin{center}
\begin{minipage}{15.5cm}
\begin{minipage}[t]{2.3cm}{\bf Keywords:}\end{minipage}
\begin{minipage}[t]{13.1cm}
Concurrence, Entanglement, Monogamy inequality
\end{minipage}\par\vglue8pt
{\bf PACS: }03.65.Ud, 02.10.Ox, 03.67.Mn
\end{minipage}
\end{center}

\section{Introduction}
Quantum entanglement is a crucial feature of quantum mechanics. Entangled states are widely used in quantum information processing and quantum computation [1], such as quantum cryptographic schemes [2], entanglement swapping [3,4], quantum teleportation [5], dense coding [6] and so on.

The concurrence is one of the important measures of quantum entanglement. However, although concurrence is defined for arbitrary dimensional mixed quantum states, it is not easy to compute due to the extremum involved in the calculation. So far no explicit analytic formulae of concurrence have been found for systems larger than a pair of qubits [7], except for some special high dimensional bipartite symmetric states [8-11]. In terms of the substates and the generalized partial transposition criterion, analytical lower bounds of concurrence were presented for tripartite quantum systems [12-14]. Analytical lower bounds of concurrence for four-partite quantum states were provided in [15,16]. In Ref. [17], the authors derived a lower bound of concurrence for qubit quantum states.
The lower bounds of concurrence in terms of sub-states for tripartite quantum states were studied in [18], but the tripartite states were in the same dimensional systems. 
A generalized formula of concurrence for $n$-dimensional quantum systems was presented in [19]. Using the properties of the generalized concurrence, the entanglement of formation and the separability of high dimensional mixed states can be studied. An explicit lower bound of the concurrence for multipartite quantum states was derived in [20].
Considerable efforts have been devoted to multipartite concurrence [21,22].
Nevertheless, few analytic formulae for multipartite concurrence are known due to its complexity
compared with bipartite cases. 

In this paper, we study the lower bounds of concurrence for multipartite mixed quantum states. In Section 2, by projecting high-dimensional states to $2\otimes 2\otimes 2$ three-qubit substates and using the monogamy property of concurrence, we present an analytical lower bound of concurrence for any tripartite quantum state. The results are generalized to arbitrary multipartite systems. In Section 3, we project a high-dimensional four-partite quantum state to lower-dimensional ones and obtain a lower bound of concurrence for four-partite quantum states. By a detailed example we show that our results improve the existing lower bounds of concurrence. Comments and conclusions are given in Section 4.

\section{Lower bound of concurrence for multipartite systems from qubits substates} \label{The bound}
Let $H_i$, $i=1,2,\cdots,N$, be $d_i$-dimensional Hilbert spaces. The concurrence of an $N$-partite quantum pure state $|\varphi\rangle\in H_1\otimes\ H_2\otimes\cdots\otimes H_N$ is defined by [11],
\begin{eqnarray}
C_N(|\varphi\rangle)=2^{1-\frac{N}{2}}\sqrt{(2^N-2)-\sum_\alpha tr(\rho_\alpha^2)},
\end{eqnarray}
where the index $\alpha$ labels all $2^N-2$ subsystems of the $N$-partite quantum system and $\rho_\alpha$ are the reduced density matrices of $\rho=|\varphi\rangle\langle\varphi|$, $\rho_\alpha=tr_{\bar{\alpha}}(\rho)$, $\alpha\subset$ $\{1,2,\cdots,N$\}, $\bar{\alpha}$ is the compliment of $\alpha$.
For a mixed multipartite quantum state $\rho=\sum_ip_i|\varphi_i\rangle\langle\varphi_i|\in H_1\otimes\ H_2\otimes\cdots\otimes H_N$, the concurrence is given by the convex roof,
\begin{eqnarray}
C_N(\rho)=min_{\{p_i,|\varphi_i\rangle\}}\sum_ip_iC_N(|\varphi_i\rangle\langle\varphi_i|),
\end{eqnarray}
where the minimum is taken over all possible convex partitions of $\rho$ into pure state ensembles $\{p_i,|\varphi_i\rangle\}$, $0\leq p_i\leq1$ and $\sum_ip_i=1$.

For an $N$-partite quantum pure state $|\varphi\rangle \in H_1\otimes\ H_2\otimes\cdots\otimes H_N$, consider the general $M$-partite decomposition of $|\varphi\rangle$, $\{M_1,M_2,...,M_j\}$, with
$M_k$ partitions, $k=1,2,...,j$, each containing $k$ subspaces of $N$: $\sum_{k=1}^jM_k=M$, $\sum_{k=1}^jkM_k=N$.
The concurrence of the state $|\varphi\rangle$ under such $M$-partite partition is given by
\begin{eqnarray}
C_M(|\varphi\rangle)=2^{1-\frac{M}{2}}\sqrt{(2^M-2)-\sum_\beta tr(\rho_\beta^2)},
\end{eqnarray}
where $\beta\in\{M_1,M_2,...,M_j\}$.

We first consider the concurrence for tripartite quantum systems. A pure tripartite quantum state $|\varphi\rangle_{d_1\otimes d_2\otimes d_3}\in H_1\otimes H_2\otimes H_3$ with the dimensions $d_1$, $d_2$ and $d_3$, respectively, is of the form
\begin{eqnarray}
|\varphi\rangle_{d_1\otimes d_2\otimes d_3}=\sum_{i=1}^{d_1}\sum_{j=1}^{d_2}\sum_{k=1}^{d_3}a_{ijk}|ijk\rangle,
\end{eqnarray}
where $a_{ijk}\in$ $\mathbb{C}$, $\sum_{ijk}a_{ijk}$$a_{ijk}^\ast$=1. The squared concurrence of $|\varphi\rangle_{d_1\otimes d_2\otimes d_3}$, $C_3^2(|\varphi\rangle_{d_1\otimes d_2\otimes d_3})$, is given by [14],
\begin{eqnarray}
\begin{split}
C_3^2(|\varphi\rangle_{d_1\otimes d_2\otimes d_3})=\frac{1}{2}\sum_{i,p=1}^{d_1}\sum_{j,q=1}^{d_2}\sum_{k,t=1}^{d_3}(&|a_{ijk}a_{pqt}-a_{ijt}a_{pqk}|+|a_{ijk}a_{pqt}-a_{iqk}a_{pjt}|\\
&+|a_{ijk}a_{pqt}-a_{pjk}a_{iqt}|).
\end{split}
\end{eqnarray}
We project the tripartite state $|\varphi\rangle_{d_1\otimes d_2\otimes d_3}$ to (three-qubit) substates $|\varphi\rangle_{2\otimes 2\otimes 2}$ given by
\begin{eqnarray}
|\varphi\rangle_{2\otimes 2\otimes 2}=\sum_{i\in\{i_1,i_2\}}\sum_{j\in\{j_1,j_2\}}\sum_{k\in\{k_1,k_2\}}a_{ijk}|ijk\rangle,
\end{eqnarray}
where $i_1\neq i_2\in 1,...,d_1$, $j_1\neq j_2\in 1,...,d_2$, $k_1\neq k_2\in 1,...,d_3$.
There are $d_1\choose 2$$d_2\choose 2$$d_3\choose 2$ $=$\\$\frac{d_1d_2d_3(d_1-1)(d_2-1)(d_3-1)}{8}$ different substates.
By using Eq.(5) we have
\begin{eqnarray}\label{c327}
\nonumber C_3^2(|\varphi\rangle_{d_1\otimes d_2\otimes d_3})&=&\frac{d_1^2d_2^2d_3^2}{8d_1d_2d_3(d_1-1)(d_2-1)(d_3-1)}\sum C_3^2(|\varphi\rangle_{2\otimes 2\otimes 2})\\&\geq&\frac{1}{(d_1-1)(d_2-1)(d_3-1)}\sum C_3^2(|\varphi\rangle_{2\otimes 2\otimes 2}).
\end{eqnarray}
For a mixed state $\rho_{d_1\otimes d_2\otimes d_3}$, the corresponding three-qubit substates $\rho_{2\otimes2\otimes2}$ have the following form,
\begin{eqnarray}
\rho_{2\otimes2\otimes2}=
\left[ \begin{array}{cccccccc}
           \rho_{i_1j_1k_1},_{i_1j_1k_1}&\rho_{i_1j_1k_1},_{i_1j_1k_2}&\cdots&\rho_{i_1j_1k_1},_{i_2j_2k_1}&\rho_{i_1j_1k_1},_{i_2j_2k_2}  \\
           \rho_{i_1j_1k_2},_{i_1j_1k_1}&\rho_{i_1j_1k_2},_{i_1j_1k_2}&\cdots&\rho_{i_1j_1k_2},_{i_2j_2k_1}&\rho_{i_1j_1k_2},_{i_2j_2k_2}  \\
           \rho_{i_1j_2k_1},_{i_1j_1k_1}&\rho_{i_1j_2k_1},_{i_1j_1k_2}&\cdots&\rho_{i_1j_2k_1},_{i_2j_2k_1}&\rho_{i_1j_2k_1},_{i_2j_2k_2}  \\
           \vdots&\vdots&\vdots&\vdots&\vdots   \\
           \rho_{i_2j_1k_2},_{i_1j_1k_1}&\rho_{i_2j_1k_2},_{i_1j_1k_2}&\cdots&\rho_{i_2j_1k_2},_{i_2j_2k_1}&\rho_{i_2j_1k_2},_{i_2j_2k_2}  \\
           \rho_{i_2j_2k_1},_{i_1j_1k_1}&\rho_{i_2j_2k_1},_{i_1j_1k_2}&\cdots&\rho_{i_2j_2k_1},_{i_2j_2k_1}&\rho_{i_2j_2k_1},_{i_2j_2k_2}  \\
           \rho_{i_2j_2k_2},_{i_1j_1k_1}&\rho_{i_2j_2k_2},_{i_1j_1k_2}&\cdots&\rho_{i_2j_2k_2},_{i_2j_2k_1}&\rho_{i_2j_2k_2},_{i_2j_2k_2}
           \end{array}
      \right ],
\label{A}
\end{eqnarray}
which are unnormalized mixed ones.

A lower bound of concurrence for tripartite quantum states is given by the following result. 
\begin{thm}
For any $d_1\otimes d_2\otimes d_3$ tripartite quantum mixed state $\rho_{d_1\otimes d_2\otimes d_3}$, the concurrence $C_3(\rho_{d_1\otimes d_2\otimes d_3})$ satisfies
\begin{eqnarray}
C_3(\rho_{d_1\otimes d_2\otimes d_3})\geq\frac{1}{\sqrt{(d_1-1)(d_2-1)(d_3-1)}}[\sum \sum_{m=1}^2\sum_{n>m}^3C_{mn}^2(\rho_{2\otimes 2\otimes 2})]^\frac{1}{2},
\end{eqnarray}
where $\sum$ sums over 
all possible $2\otimes 2\otimes 2$ mixed substates $\rho_{2\otimes 2\otimes 2}$.
\end{thm}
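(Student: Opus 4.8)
The plan is to prove the bound first for pure states, by combining the projection estimate~(7) with the monogamy of concurrence, and then to lift it to mixed states through the convex roof together with the triangle inequality for the Euclidean norm.

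\emph{Monogamy step.} Fix a three-qubit pure state $|\varphi\rangle_{2\otimes2\otimes2}$. Specialising Eq.~(1) to $N=3$ and using $tr(\rho_\alpha^2)=tr(\rho_{\bar\alpha}^2)$ for pure states, one gets $C_3^2(|\varphi\rangle_{2\otimes2\otimes2})=3-\sum_{m=1}^3 tr(\rho_m^2)=\tfrac12\sum_{m=1}^3 C_{m|\bar m}^2$, where $C_{m|\bar m}^2=2\bigl(1-tr(\rho_m^2)\bigr)$ is the squared bipartite concurrence of the cut $\{m\}$ versus the other two qubits. The Coffman--Kundu--Wootters monogamy inequality gives $C_{m|\bar m}^2\ge\sum_{n\neq m}C_{mn}^2$, where $C_{mn}$ denotes the concurrence of the reduced two-qubit density matrix $\rho_{mn}$. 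Summing over $m$ and halving yields
\[
C_3^2(|\varphi\rangle_{2\otimes2\otimes2})\ \ge\ \sum_{m=1}^{2}\sum_{n>m}^{3}C_{mn}^2(|\varphi\rangle_{2\otimes2\otimes2}).
\]
Applying this to each of the $\binom{d_1}{2}\binom{d_2}{2}\binom{d_3}{2}$ qubit substates of a pure state $|\varphi\rangle_{d_1\otimes d_2\otimes d_3}$ and inserting the outcome into~(7) gives the asserted inequality for pure states after taking square roots.

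\emph{Convex-roof step.} Let $\rho_{d_1\otimes d_2\otimes d_3}=\sum_i p_i|\varphi_i\rangle\langle\varphi_i|$ be an optimal decomposition, so that $C_3(\rho_{d_1\otimes d_2\otimes d_3})=\sum_i p_i C_3(|\varphi_i\rangle)$. For each $i$ collect the nonnegative numbers $C_{mn}(|\varphi_i\rangle_{2\otimes2\otimes2})$, over all qubit substates and all pairs $m<n$, into a vector $\vec v_i$; the pure-state bound reads $C_3(|\varphi_i\rangle)\ge\|\vec v_i\|_2\big/\sqrt{(d_1-1)(d_2-1)(d_3-1)}$. Hence
\[
C_3(\rho_{d_1\otimes d_2\otimes d_3})\ \ge\ \frac{\sum_i p_i\|\vec v_i\|_2}{\sqrt{(d_1-1)(d_2-1)(d_3-1)}}\ \ge\ \frac{\bigl\|\sum_i p_i\vec v_i\bigr\|_2}{\sqrt{(d_1-1)(d_2-1)(d_3-1)}},
\]
the last step being the triangle inequality for $\|\cdot\|_2$. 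It remains to compare $\sum_i p_i\vec v_i$ with the vector whose entries are $C_{mn}(\rho_{2\otimes2\otimes2})$: for a fixed substate with product projector $P=P_1\otimes P_2\otimes P_3$ one has $\rho_{2\otimes2\otimes2}=\sum_i p_i\,P|\varphi_i\rangle\langle\varphi_i|P$, so tracing out the third party gives $\rho_{mn}=\sum_i p_i\,\sigma_i$ with $\sigma_i$ the two-qubit operator of $P|\varphi_i\rangle$ and $C(\sigma_i)=C_{mn}(|\varphi_i\rangle_{2\otimes2\otimes2})$; convexity and positive homogeneity of the two-qubit concurrence then yield $C_{mn}(\rho_{2\otimes2\otimes2})\le\sum_i p_i C_{mn}(|\varphi_i\rangle_{2\otimes2\otimes2})$. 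Thus $\sum_i p_i\vec v_i$ dominates that vector entrywise, and monotonicity of $\|\cdot\|_2$ on the nonnegative orthant turns the previous display into the claimed bound.

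\emph{Main obstacle.} The monogamy identity and the projection estimate~(7) are standard or already at hand; the delicate point is the convex-roof step, where one must correctly identify the ensemble of each projected qubit substate induced by the optimal decomposition of the global state, invoke convexity of the two-qubit concurrence in the right direction, and keep track of the normalisations of the unnormalised substates throughout.
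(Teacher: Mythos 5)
Your proposal is correct and follows essentially the same route as the paper: the identity $C_3^2=\tfrac12\sum_m C_{m|\bar m}^2$ plus CKW monogamy for the pure three-qubit substates, insertion into Eq.~(7), and then the convex roof combined with the Minkowski (Euclidean triangle) inequality and the fact that the optimal global decomposition induces a decomposition of each projected substate. Your treatment of the final step is in fact slightly more explicit than the paper's about the induced ensembles and the homogeneous extension of concurrence to the unnormalised substates, but the argument is the same.
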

\begin{proof}
For a pure quantum state $|\varphi\rangle_{2\otimes 2\otimes 2}$, according to (1) and (3) one has
{\setlength\abovedisplayskip{1pt}
\setlength\belowdisplayskip{1pt}
\begin{eqnarray}
C_3^2(|\varphi\rangle_{2\otimes 2\otimes 2})=\frac{1}{2}(C_{1|23}^2(|\varphi\rangle_{2\otimes 2\otimes 2})+C_{2|13}^2(|\varphi\rangle_{2\otimes 2\otimes 2})+C_{3|12}^2(|\varphi\rangle_{2\otimes 2\otimes 2})).
\end{eqnarray}}
The monogamy relation of concurrence [14] implies that 
{\setlength\abovedisplayskip{1pt}
\setlength\belowdisplayskip{1pt}
\begin{eqnarray}
C_{i|jk}^2(|\varphi\rangle_{2\otimes 2\otimes 2})\geq C_{ij}^2(|\varphi\rangle_{2\otimes 2\otimes 2})+C_{ik}^2(|\varphi\rangle_{2\otimes 2\otimes 2}), \ i\neq j\neq k=1,2,3.
\end{eqnarray}}
Therefore,
{\setlength\abovedisplayskip{1pt}
\setlength\belowdisplayskip{1pt}
\begin{eqnarray}
C_3^2(|\varphi\rangle_{2\otimes 2\otimes 2})\geq C_{12}^2(|\varphi\rangle_{2\otimes 2\otimes 2})+C_{13}^2(|\varphi\rangle_{2\otimes 2\otimes 2})+C_{23}^2(|\varphi\rangle_{2\otimes 2\otimes 2}).
\end{eqnarray}}
From (7) and (12), we obtain
\begin{eqnarray}
\begin{split}
C_3^2(|\varphi\rangle_{d_1\otimes d_2\otimes d_3})&\geq\frac{1}{(d_1-1)(d_2-1)(d_3-1)}\sum
C_3^2(|\varphi\rangle_{2\otimes 2\otimes 2})\\
&\geq\frac{1}{(d_1-1)(d_2-1)(d_3-1)}\sum\sum_{m=1}^2\sum_{n>m}^3C_{mn}^2(|\varphi\rangle_{2\otimes 2\otimes 2}).
\end{split}
\end{eqnarray}
For a mixed state $\rho_{d_1\otimes d_2\otimes d_3}=\Sigma_i p_i|\varphi_i\rangle\langle\varphi_i|$, we have
\begin{eqnarray}
  \begin{split}
C_3(\rho_{d_1\otimes d_2\otimes d_3})&=min \sum_ip_iC_3(|\varphi_i\rangle_{d_1\otimes d_2\otimes d_3})\\
&\geq min \frac{1}{\sqrt{(d_1-1)(d_2-1)(d_3-1)}}\sum_ip_i(\sum \sum_{m=1}^2\sum_{n>m}^3C_{mn}^2(|\varphi_i\rangle_{2\otimes 2\otimes 2}))^\frac{1}{2}\\
&\geq min
\frac{1}{\sqrt{(d_1-1)(d_2-1)(d_3-1)}}[\sum(\sum_ip_i\sum_{m=1}^2\sum_{n>m}^3C_{mn}^2(|\varphi_i\rangle_{2\otimes 2\otimes 2}))^2]^\frac{1}{2}    \\
&\geq \frac{1}{\sqrt{(d_1-1)(d_2-1)(d_3-1)}}[\sum\sum_{m=1}^2\sum_{n>m}^3(min \sum_ip_iC_{mn}^2(|\varphi_i\rangle_{2\otimes 2\otimes 2}))^2]^\frac{1}{2}    \\
&=\frac{1}{\sqrt{(d_1-1)(d_2-1)(d_3-1)}}[\sum \sum_{m=1}^2\sum_{n>m}^3C_{mn}^2(\rho_{2\otimes 2\otimes 2})]^\frac{1}{2},                     \\
  \end{split}
\end{eqnarray}
where we have used the Minkowski inequality $(\sum_j(\sum_ix_{ij})^2)^\frac{1}{2}\leq\sum_i(\sum_j x_{ij}^2)^\frac{1}{2}$ in the second inequality, the minimum is taken over all possible pure state decompositions of the mixed state $\rho_{d_1\otimes d_2\otimes d_3}$ in the first three minimizations, while the minimum in the last inequality is taken over all pure state decompositions of $\rho_{2\otimes 2\otimes 2}$.
\end{proof}
{\bf Remark 1.} Theorem 1 in [18] gives the lower bound of quantum states in a tripartite quantum system with subsystem dimensions N, respectively. We study the lower bound of concurrence of tripartite quantum states in a quantum system with different dimensions. And when m = 2, the lower bound of concurrence in our Theorem 1 is smaller than Theorem 1 in [18]. Theorem 2 in [16], the authors derive the bound of concurrence for $2\otimes 2\otimes 4$ quantum states, i.e, $C_3(\rho_{2\otimes 2\otimes 4})\geq\frac{1}{3}\sum C_3^2(\rho_{2\otimes2\otimes2})$, thus our Theorem 1 is a generalization of the Theorem 2 given in [16].

Next we consider the lower bound of concurrence for four-partite quantum systems. We first consider a pure bipartite quantum state $|\varphi\rangle_{d_1\otimes d_2}=\sum_{i=1}^{d_1}\sum_{j=1}^{d_2}a_{ij}|ij\rangle$, where $a_{ij}\in$ $\mathbb{C}$, $\sum_{ij}a_{ij}a_{ij}^\ast\\=1$. The concurrence of $|\varphi\rangle_{d_1\otimes d_2}$ can be written as $C^2(|\varphi\rangle_{d_1\otimes d_2})=\sum_{i,p=1}^{d_1}\sum_{j,q=1}^{d_2}|a_{ij}a_{pq}-a_{pj}a_{iq}|^2$.
The projected two-qubit substates of $|\varphi\rangle_{d_1\otimes d_2}$ are of the form,
$|\varphi\rangle_{2\otimes 2}$ $=$ $\sum_{i\in\{i_1,i_2\}}\sum_{j\in\{j_1,j_2\}}$\\$a_{ij}|ij\rangle$, where
$i_1\neq i_2\in 1,...,d_1$ and $j_1\neq j_2\in 1,...,d_2$.
For a pure bipartite quantum state $|\varphi\rangle_{d_1\otimes d_2}$, similar to (\ref{c327}), we have
\begin{eqnarray}
C^2(|\varphi\rangle_{d_1\otimes d_2})\geq\frac{1}{(d_1-1)(d_2-1)}\sum C^2(|\varphi\rangle_{2\otimes 2}),
\end{eqnarray}
where 
the summation runs over all possible $2\otimes 2$ pure sub-states $|\varphi\rangle_{2\otimes 2}$.

For any mixed quantum state $\rho\in H_1\otimes H_2\otimes H_3\otimes H_4$, the concurrence is bounded by [16]
{\setlength\abovedisplayskip{1pt}
\setlength\belowdisplayskip{1pt}
\begin{eqnarray}
 \begin{split} C_4^2(\rho)&\geq\frac{1}{12}(2C_{1|2|34}^2(\rho)+2C_{1|3|24}^2(\rho)+2C_{1|4|23}^2(\rho)+2C_{12|3|4}^2(\rho)+2C_{13|2|4}^2(\rho)\\
& +2C_{14|2|3}^2(\rho)+C_{12|34}^2(\rho)+C_{13|24}^2(\rho)+C_{14|23}^2(\rho)).
 \end{split}
\end{eqnarray}}
From (7), (15) and (16), we have the following theorem:
\begin{thm}
For any $d_1\otimes d_2\otimes d_3\otimes d_4$ mixed quantum state $\rho_{d_1\otimes d_2\otimes d_3\otimes d_4}$, the concurrence $C_4(\rho_{d_1\otimes d_2\otimes d_3\otimes d_4})$ satisfies
\begin{eqnarray}
 \begin{split}
 &C_4^2(\rho_{d_1\otimes d_2\otimes d_3\otimes d_4})\geq\frac{1}{12}(\sum_{\rho_{1|2|34}}\frac{2}{(d_1-1)(d_2-1)(d_3+d_4-1)}C_3^2(\rho_{2\otimes 2\otimes 2})\\
 &+\sum_{\rho_{1|3|24}}\frac{2}{(d_1-1)(d_3-1)(d_2+d_4-1)}C_3^2(\rho_{2\otimes 2\otimes 2})
 +\sum_{\rho_{1|4|23}}\frac{2}{(d_1-1)(d_4-1)(d_2+d_3-1)}\\
 &C_3^2(\rho_{2\otimes 2\otimes 2})+\sum_{\rho_{12|3|4}}\frac{2}{(d_3-1)(d_4-1)(d_1+d_2-1)}C_3^2(\rho_{2\otimes 2\otimes 2})\\
 &+\sum_{\rho_{13|2|4}}\frac{2}{(d_2-1)(d_4-1)(d_1+d_3-1)}C_3^2(\rho_{2\otimes 2\otimes 2})
 +\sum_{\rho_{14|2|3}}\frac{2}{(d_2-1)(d_3-1)(d_1+d_4-1)}\\
 &C_3^2(\rho_{2\otimes 2\otimes 2})+\sum_{\rho_{12|34}}\frac{1}{(d_3+d_4-1)(d_1+d_2-1)}C^2(\rho_{2\otimes 2})\\
 &+\sum_{\rho_{13|24}}\frac{1}{(d_2+d_4-1)(d_1+d_3-1)}C^2(\rho_{2\otimes 2})
 +\sum_{\rho_{14|23}}\frac{1}{(d_2+d_3-1)(d_1+d_4-1)}C^2(\rho_{2\otimes 2})).\\
  \end{split}
\end{eqnarray}
\end{thm}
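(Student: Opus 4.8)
The plan is to mimic the structure of the proof of Theorem 1, but now applied termwise to each of the nine pieces appearing in the four-partite decomposition bound (17). The starting point is the inequality (16), which already expresses $C_4^2(\rho)$ as a weighted sum of squared concurrences for the six $1|2|34$-type tripartite partitions and the three $12|34$-type bipartite partitions. For each such partition, the two merged subsystems (e.g.\ $H_3\otimes H_4$ in the $1|2|34$ case) form a single subspace of dimension $d_3 d_4$, so the relevant object is a tripartite state living in $H_1\otimes H_2\otimes (H_3\otimes H_4)$ with ``dimensions'' $d_1$, $d_2$, $d_3 d_4$. My first step is therefore to observe that projecting this effective tripartite state down to $2\otimes 2\otimes 2$ substates is exactly the situation handled in (7); I would, however, need to be slightly careful about which ``$2$'' is chosen inside the $d_3 d_4$-dimensional merged space, and the claimed denominator $(d_1-1)(d_2-1)(d_3+d_4-1)$ indicates that the authors are projecting $H_3\otimes H_4$ not onto an arbitrary $2$-dimensional subspace but onto one spanned by two product basis vectors, effectively treating the merged space as having an effective ``dimension'' $d_3+d_4-1$ in the counting argument. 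I would make this projection scheme explicit first, since everything else is bookkeeping on top of it.

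Second, for each of the six tripartite terms I would apply the estimate (7) in the form
\begin{eqnarray}
C_{1|2|34}^2(|\varphi\rangle)\geq\frac{1}{(d_1-1)(d_2-1)(d_3+d_4-1)}\sum C_3^2(|\varphi\rangle_{2\otimes 2\otimes 2}),\nonumber
\end{eqnarray}
and analogously for the other five with the appropriate permutation of indices; for each of the three bipartite terms I would apply (15) in the form
\begin{eqnarray}
C_{12|34}^2(|\varphi\rangle)\geq\frac{1}{(d_1+d_2-1)(d_3+d_4-1)}\sum C^2(|\varphi\rangle_{2\otimes 2}),\nonumber
\end{eqnarray}
again permuting for the other two. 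Substituting all nine of these into (16) gives the desired bound (17) at the level of pure states $|\varphi\rangle_{d_1\otimes d_2\otimes d_3\otimes d_4}$, with the factors $\tfrac{2}{12}$ and $\tfrac{1}{12}$ carried along verbatim.

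Third, I would pass from pure states to mixed states exactly as in the last display of the proof of Theorem 1: write $\rho=\sum_i p_i|\varphi_i\rangle\langle\varphi_i|$, use $C_4(\rho)=\min\sum_i p_i C_4(|\varphi_i\rangle)$, apply the pure-state bound just derived inside the sum, and then invoke the Minkowski inequality together with convexity of $x\mapsto x^2$ to move the minimization inside each $\sum C_3^2(\rho_{2\otimes2\otimes2})$ and $\sum C^2(\rho_{2\otimes2})$ term. Because (17) is stated for $C_4^2$ rather than $C_4$, this last step is actually a bit cleaner than in Theorem 1: one only needs $\big(\min\sum_i p_i C_4(|\varphi_i\rangle)\big)^2\ge \min\sum_i p_i C_4^2(|\varphi_i\rangle)$ — which follows from Cauchy–Schwarz/convexity — and then linearity of the minimum over the nine nonnegative summands is replaced by superadditivity of the minimum, i.e.\ $\min\sum_i p_i(\sum_k f_k)\ge \sum_k \min\sum_i p_i f_k$.

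The main obstacle I anticipate is the precise justification of the denominators $d_a+d_b-1$ for the merged subsystems. Formula (7) as stated gives a factor $(d-1)$ for a genuine $d$-dimensional factor space, and it is not a priori obvious that merging $H_a\otimes H_b$ and then projecting yields $(d_a+d_b-1)$ rather than $(d_a d_b-1)$ — the weaker bound — in the denominator. Making this rigorous requires exhibiting the right family of $2\otimes2\otimes2$ (resp.\ $2\otimes2$) substates of the four-partite state: one fixes two basis vectors in each \emph{unmerged} factor as usual, and in the merged pair $H_a\otimes H_b$ one ranges over the $2$-dimensional subspaces spanned by pairs chosen from a distinguished set of $d_a+d_b-1$ product vectors (for instance $\{|1\rangle_a|k\rangle_b\}\cup\{|i\rangle_a|1\rangle_b\}$), then counts how many times each four-partite amplitude relation is reproduced and compares with the full $C_3^2$ or $C^2$ expansion, exactly as the combinatorial identity in (7) was obtained. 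Once this counting lemma is in place, the rest of the argument is the routine assembly described above.
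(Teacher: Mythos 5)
Your overall strategy --- substitute the projection bounds (7) and (15), applied partition by partition to the nine terms of (16), then pass to mixed states via the convex roof and the Minkowski inequality --- is exactly what the paper intends; the paper itself offers nothing beyond the sentence ``From (7), (15) and (16), we have the following theorem,'' so your reconstruction is already more explicit than the original. However, the obstacle you flag is a genuine gap, and you do not close it. Applying (7) to the effective tripartite system $H_1\otimes H_2\otimes(H_3\otimes H_4)$, whose merged factor has dimension $d_3d_4$, yields the denominator $(d_1-1)(d_2-1)(d_3d_4-1)$, and since $d_3d_4-1\ge d_3+d_4-1$ whenever $d_3,d_4\ge2$ (with equality only for $d_3=d_4=2$), the printed bound with $(d_3+d_4-1)$ is strictly stronger than what the cited ingredients deliver; the same issue affects every merged factor, including the bipartite terms with $(d_1+d_2-1)(d_3+d_4-1)$ in place of $(d_1d_2-1)(d_3d_4-1)$. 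Your proposed repair --- restricting to $2$-dimensional subspaces spanned by pairs drawn from a distinguished set of $d_3+d_4-1$ product vectors --- silently changes what $\sum_{\rho_{1|2|34}}$ ranges over (a much smaller family than ``all $2\otimes2\otimes2$ substates''), drops all terms of $C_{1|2|34}^2$ not supported on that set, and the natural multiplicity count for such a family gives $d_3+d_4-2$ rather than $d_3+d_4-1$. So the ``counting lemma'' you defer is not routine bookkeeping; as far as I can see neither your argument nor the paper justifies the constants as printed.

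There is also a self-contained error in your mixed-state step: the ``cleaner'' shortcut asserts $\bigl(\min\sum_ip_iC_4(|\varphi_i\rangle)\bigr)^2\ge\min\sum_ip_iC_4^2(|\varphi_i\rangle)$, but Jensen/Cauchy--Schwarz gives the reverse direction, $\bigl(\sum_ip_ix_i\bigr)^2\le\sum_ip_ix_i^2$, hence $C_4^2(\rho)\le\min\sum_ip_iC_4^2(|\varphi_i\rangle)$; the shortcut therefore proves nothing. You must stay with the route you describe first: take the square root of the pure-state bound, average, apply Minkowski to pull the square root outside the ensemble average, and only then minimize term by term, exactly as in the last display of the proof of Theorem 1 (with the usual caveat that the induced decompositions of the projected substates are subnormalized).
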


Next we consider the case of $N\geq5$. For any $N$-qubits $(N\geq5)$ mixed state $\rho$, the concurrence $C(\rho)$ satisfies [17]
\begin{eqnarray}
C^2(\rho)\geq\frac{N}{2^{N-2}}\sum_{i=1}^{N-1}\sum_{j>i}^NC_{ij}^2(\rho).
\end{eqnarray}
A pure $N$-partite quantum state $|\varphi\rangle_{d_1\otimes d_2\otimes\cdots\otimes d_N}\in H_1\otimes H_2\otimes \cdots\otimes H_N$ with the dimensions $d_1$, $d_2$, $\cdots$, $d_N$, respectively, has the form,
\begin{eqnarray}
|\varphi\rangle_{d_1\otimes d_2\otimes\cdots\otimes d_N}=\sum_{r_1=1}^{d_1}\sum_{r_2=1}^{d_2}\cdots\sum_{r_N=1}^{d_N}a_{r_1r_2\cdots r_N}|r_1r_2\cdots r_N\rangle,
\end{eqnarray}
where $a_{r_1r_2\cdots r_N}\in$ $\mathbb{C}$, $\sum_{r_1r_2\cdots r_N}a_{r_1r_2\cdots r_N}a_{r_1r_2\cdots r_N}^\ast=1$. Hence we get
\begin{eqnarray}
\begin{array}{ll}
 C^2(|\varphi\rangle_{d_1\otimes d_2\otimes\cdots\otimes d_N})&\\
 =\displaystyle\frac{1}{2^{N-2}}\sum_{r_1,h_1=1}^{d_1}\sum_{r_2,h_2=1}^{d_2}\cdots\sum_{r_N,h_N=1}^{d_N}(|a_{r_1r_2\cdots r_N}a_{h_1h_2\cdots h_N}-a_{h_1r_2\cdots r_N}a_{r_1h_2\cdots h_N}|^2+\cdots\\
 +|a_{r_1r_2\cdots r_N}a_{h_1h_2\cdots h_N}-a_{r_1r_2\cdots h_N}a_{h_1h_2\cdots r_N}|^2.
\end{array}
\end{eqnarray}
According to (18) and (20), using the similar method to Theorem 1 we can generalize our result to $N$-partite quantum systems as follows:

\begin{thm}
For any $d_1\otimes d_2\otimes\cdots\otimes d_N$ N-partite ($N\geq 5$) mixed state $\rho_{d_1\otimes d_2\otimes\cdots\otimes d_N}$, the concurrence $C_N(\rho_{d_1\otimes d_2\otimes\cdots\otimes d_N})$ satisfies
\begin{eqnarray}
\nonumber C_N^2(\rho_{d_1\otimes d_2\otimes\cdots\otimes d_N})\geq\frac{N}{2^{N-2}(d_1-1)(d_2-1)\cdots(d_N-1)}\sum\sum_{i=1}^{N-1}\sum_{j>i}^NC_{ij}^2(\rho_{2\otimes2\otimes\cdots\otimes2}),
\end{eqnarray}
where $\sum$ represents the sum of all possible $2\otimes2\otimes\cdots\otimes2$ mixed substates $\rho_{2\otimes2\otimes\cdots\otimes2}$.
\end{thm}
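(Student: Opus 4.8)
The plan is to run, for $N$-partite systems, the same three-part argument used for Theorem~1: a projection inequality for pure states, insertion of the $N$-qubit bound~(18), and passage to mixed states via the convex roof together with the Minkowski inequality. First, for a pure state $|\varphi\rangle_{d_1\otimes d_2\otimes\cdots\otimes d_N}$ I would project onto all $\binom{d_1}{2}\binom{d_2}{2}\cdots\binom{d_N}{2}$ qubit substates $|\varphi\rangle_{2\otimes2\otimes\cdots\otimes2}$ obtained by keeping two basis indices in each tensor factor. Summing the amplitude expression for $C^2(|\varphi\rangle_{2\otimes2\otimes\cdots\otimes2})$ over all these substates and comparing with the amplitude expression for $C^2(|\varphi\rangle_{d_1\otimes d_2\otimes\cdots\otimes d_N})$, the multiplicity count that underlies Eq.~(7) (in its $N$-partite form) shows that every monomial of $C^2$ of the full state is reproduced at most $(d_1-1)(d_2-1)\cdots(d_N-1)$ times, which gives the pure-state projection bound $C^2(|\varphi\rangle_{d_1\otimes d_2\otimes\cdots\otimes d_N})\ge\frac{1}{(d_1-1)(d_2-1)\cdots(d_N-1)}\sum C^2(|\varphi\rangle_{2\otimes2\otimes\cdots\otimes2})$.

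Next, each projected state $|\varphi\rangle_{2\otimes2\otimes\cdots\otimes2}$ is a (sub-normalized) $N$-qubit pure state, so the bound~(18) applies to it and yields $C^2(|\varphi\rangle_{2\otimes2\otimes\cdots\otimes2})\ge\frac{N}{2^{N-2}}\sum_{i=1}^{N-1}\sum_{j>i}^{N}C_{ij}^2(|\varphi\rangle_{2\otimes2\otimes\cdots\otimes2})$. Combined with the projection bound of the previous step, this gives, for every pure state, $C^2(|\varphi\rangle_{d_1\otimes d_2\otimes\cdots\otimes d_N})\ge\frac{N}{2^{N-2}(d_1-1)(d_2-1)\cdots(d_N-1)}\sum\sum_{i<j}C_{ij}^2(|\varphi\rangle_{2\otimes2\otimes\cdots\otimes2})$, which is already the asserted inequality in the pure-state case, up to taking the square root.

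For a general mixed state I would fix an optimal decomposition $\rho_{d_1\otimes d_2\otimes\cdots\otimes d_N}=\sum_\ell p_\ell|\varphi_\ell\rangle\langle\varphi_\ell|$ realizing $C_N(\rho)$, take square roots in the pure-state bound and average with weights $p_\ell$, and then apply the Minkowski inequality $(\sum_j(\sum_\ell x_{\ell j})^2)^{1/2}\le\sum_\ell(\sum_j x_{\ell j}^2)^{1/2}$ twice — first with $j$ ranging over the qubit substates, then with $j$ ranging over the pairs $(i,j)$ — exactly as in the proof of Theorem~1. This reduces the lower bound to $\frac{1}{\sqrt{(d_1-1)(d_2-1)\cdots(d_N-1)}}\bigl(\frac{N}{2^{N-2}}\bigr)^{1/2}\bigl[\sum\sum_{i<j}\bigl(\sum_\ell p_\ell C_{ij}(|\varphi_\ell\rangle_{2\otimes2\otimes\cdots\otimes2})\bigr)^2\bigr]^{1/2}$. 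Since the chosen decomposition of $\rho$ induces, by the projection, a decomposition of each qubit substate $\rho_{2\otimes2\otimes\cdots\otimes2}$, the convex-roof definition of the two-qubit concurrence gives $\sum_\ell p_\ell C_{ij}(|\varphi_\ell\rangle_{2\otimes2\otimes\cdots\otimes2})\ge C_{ij}(\rho_{2\otimes2\otimes\cdots\otimes2})$; substituting and squaring yields the stated inequality.

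I expect the main obstacle to be the first step: one must verify that the multiplicity with which each monomial of $C^2(|\varphi\rangle_{d_1\otimes d_2\otimes\cdots\otimes d_N})$ is produced by the sum $\sum C^2(|\varphi\rangle_{2\otimes2\otimes\cdots\otimes2})$ is uniformly bounded by $(d_1-1)(d_2-1)\cdots(d_N-1)$ over all index patterns — configurations in which some parties' two copy-indices coincide contribute fewer substates, but the worst case is one factor $d_m-1$ per tensor slot. The remaining ingredients, namely the two Minkowski steps and the passage from a decomposition of $\rho$ to the induced decompositions of its substates, are routine and carry over essentially verbatim from the proof of Theorem~1, the only change being that inequality~(18), valid for $N\ge5$ qubits, replaces the tripartite monogamy relation.
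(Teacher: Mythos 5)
Your proposal is correct and follows exactly the route the paper intends: the paper gives no separate proof of this theorem, stating only that it follows ``according to (18) and (20), using the similar method to Theorem 1,'' which is precisely the projection--multiplicity bound, insertion of the $N$-qubit inequality (18), and the convex-roof/Minkowski argument you spell out. Your version is in fact slightly more careful than the paper's template (e.g.\ you write the Minkowski step with $\sum_\ell p_\ell C_{ij}(|\varphi_\ell\rangle)$ rather than the paper's typo $\sum_i p_i C_{mn}^2(|\varphi_i\rangle)$ inside the square, and you note explicitly why the multiplicity is uniformly bounded by $(d_1-1)\cdots(d_N-1)$).
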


{\it Example 1}. We consider the three-qutrit state,
\begin{eqnarray}
\rho_{GGHZ}=\frac{x}{27}I_{27}+(1-x)|GGHZ\rangle\langle GGHZ|,
\end{eqnarray}
where $|GGHZ\rangle=(|000\rangle+|111\rangle+|222\rangle)/\sqrt{3}$ is a generalized $GHZ$ state and $0\leq x\leq1$.
By Theorem 1, we get $C_3(\rho)\geq\frac{3\sqrt{2}(11x-9)}{4(5x-9)}$. Fig. 1 shows that this lower bound can detect the entanglement of $|GGHZ\rangle$ for $0<x<\frac{9}{11}$.
\begin{figure}[!htb]
\centerline{\includegraphics[width=0.6\textwidth]{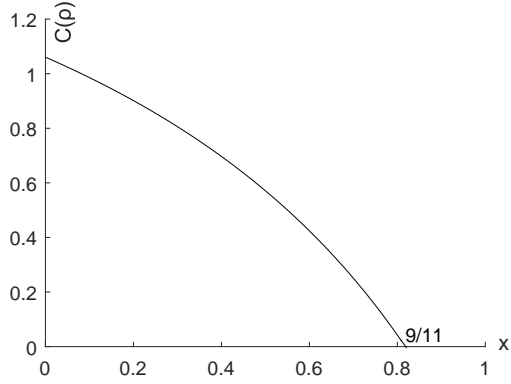}}
\renewcommand{\figurename}{Fig.}
\caption{Lower bound of $C(\rho)$ for $0\leq x\leq\frac{9}{11}$.}
\end{figure}

\section{Lower bound of concurrence for multipartite systems from qudits substates} \label{The bound}

In this section, we study lower bound of concurrence for multipartite quantum systems based on qudits substates. We focus on four-partite quantum states.
A pure four-partite quantum state $|\varphi\rangle_{d_1\otimes d_2\otimes d_3\otimes d_4}\in H_1\otimes H_2\otimes H_3\otimes H_4$ with the dimensions $d_1$, $d_2$, $d_3$ and $d_4$, respectively, has the form
\begin{eqnarray}
|\varphi\rangle_{d_1\otimes d_2\otimes d_3\otimes d_4}=\sum_{i=1}^{d_1}\sum_{j=1}^{d_2}\sum_{k=1}^{d_3}\sum_{r=1}^{d_4}a_{ijkr}|ijkr\rangle,
\end{eqnarray}
where $a_{ijkr}\in$ $\mathbb{C}$, $\sum_{ijkr}a_{ijkr}a_{ijkr}^\ast=1$. Denote $\rho_{d_1\otimes d_2\otimes d_3\otimes d_4}=|\varphi\rangle\langle\varphi|$ and
$I_0=\sum_{ijkr}a_{ijkr}a_{ijkr}^\ast$, we have
\begin{eqnarray}
I_0^2-tr(\rho_1^2)=\frac{1}{2}\sum_{i,p=1}^{d_1}\sum_{j,q=1}^{d_2}\sum_{k,t=1}^{d_3}\sum_{r,h=1}^{d_4}|a_{ijkr}a_{pqth}-a_{pjkr}a_{iqth}|^2,\nonumber\\
I_0^2-tr(\rho_2^2)=\frac{1}{2}\sum_{i,p=1}^{d_1}\sum_{j,q=1}^{d_2}\sum_{k,t=1}^{d_3}\sum_{r,h=1}^{d_4}|a_{ijkr}a_{pqth}-a_{iqkr}a_{pjth}|^2,\nonumber\\
I_0^2-tr(\rho_3^2)=\frac{1}{2}\sum_{i,p=1}^{d_1}\sum_{j,q=1}^{d_2}\sum_{k,t=1}^{d_3}\sum_{r,h=1}^{d_4}|a_{ijkr}a_{pqth}-a_{ijtr}a_{pqkh}|^2,\nonumber\\
I_0^2-tr(\rho_4^2)=\frac{1}{2}\sum_{i,p=1}^{d_1}\sum_{j,q=1}^{d_2}\sum_{k,t=1}^{d_3}\sum_{r,h=1}^{d_4}|a_{ijkr}a_{pqth}-a_{ijkh}a_{pqtr}|^2,\nonumber\\
I_0^2-tr(\rho_{12}^2)=\frac{1}{2}\sum_{i,p=1}^{d_1}\sum_{j,q=1}^{d_2}\sum_{k,t=1}^{d_3}\sum_{r,h=1}^{d_4}|a_{ijkr}a_{pqth}-a_{ijth}a_{pqkr}|^2,\nonumber\\
I_0^2-tr(\rho_{13}^2)=\frac{1}{2}\sum_{i,p=1}^{d_1}\sum_{j,q=1}^{d_2}\sum_{k,t=1}^{d_3}\sum_{r,h=1}^{d_4}|a_{ijkr}a_{pqth}-a_{pjtr}a_{iqkh}|^2,\nonumber
\end{eqnarray}
\begin{eqnarray}
I_0^2-tr(\rho_{14}^2)=\frac{1}{2}\sum_{i,p=1}^{d_1}\sum_{j,q=1}^{d_2}\sum_{k,t=1}^{d_3}\sum_{r,h=1}^{d_4}|a_{ijkr}a_{pqth}-a_{pjkh}a_{iqtr}|^2.
\end{eqnarray}
From Eq. (1), we get
\begin{eqnarray}
\begin{array}{ll}
 C^2(|\varphi\rangle_{d_1\otimes d_2\otimes d_3\otimes d_4})&\\=\frac{1}{4}\sum_{i,p=1}^{d_1}\sum_{j,q=1}^{d_2}\sum_{k,t=1}^{d_3}\sum_{r,h=1}^{d_4}(|a_{ijkr}a_{pqth}-a_{pjkr}a_{iqth}|^2+|a_{ijkr}a_{pqth}-a_{iqkr}a_{pjth}|^2 &\\
+|a_{ijkr}a_{pqth}-a_{ijtr}a_{pqkh}|^2+|a_{ijkr}a_{pqth}-a_{ijkh}a_{pqtr}|^2+|a_{ijkr}a_{pqth}-a_{pqkr}a_{ijth}|^2 &\\
+|a_{ijkr}a_{pqth}-a_{pjtr}a_{iqkh}|^2+|a_{ijkr}a_{pqth}-a_{pjkh}a_{iqtr}|^2).
\end{array}
\end{eqnarray}

For a pure state $|\varphi\rangle_{d_1\otimes d_2\otimes d_3\otimes d_4}$, its $s\otimes s\otimes s\otimes s$ pure substates $|\varphi\rangle_{s\otimes s\otimes s\otimes s}$ are of the form,
$|\varphi\rangle_{s\otimes s\otimes s\otimes s}= \sum_{i=i_1}^{i_s}\sum_{j=j_1}^{j_s}\sum_{k=k_1}^{k_s}\sum_{r=r_1}^{r_s}a_{ijkr}|ijkr\rangle=G_1\otimes G_2\otimes G_3\otimes G_4|\varphi\rangle_{d_1\otimes d_2\otimes d_3\otimes d_4}$, where $G_1=\sum_{i=i_1}^{i_s}|i\rangle\langle i|, G_2=\sum_{j=j_1}^{j_s}|j\rangle\langle j|, G_3=\sum_{k=k_1}^{k_s}|k\rangle\langle k|$ and $G_4=\sum_{r=r_1}^{r_s}|r\rangle\langle r|$, $i_s\leq d_1$, $j_s\leq d_2$, $k_s\leq d_3$ and $r_s\leq d_4$, are the projectors to $s$-dimensional subspaces, respectively.

\begin{thm}
For a four-partite mixed quantum state $\rho_{d_1\otimes d_2\otimes d_3\otimes d_4}\in H_1\otimes H_2\otimes H_3\otimes H_4$, $d_1\leq d_2\leq d_3\leq d_4$, the concurrence $C(\rho_{d_1\otimes d_2\otimes d_3\otimes d_4})$ is bounded by
\begin{eqnarray}
C^2(\rho_{d_1\otimes d_2\otimes d_3\otimes d_4})\geq\frac{1}{{d_1-2\choose s-2} {d_2-2\choose s-2} {d_3-1\choose s-1} {d_4-1\choose s-1}}\sum C^2(\rho_{s\otimes s\otimes s\otimes s}),
\end{eqnarray}
where $2\leq s\leq d_1$, ${d_i-2\choose s-2}=(d_i-2)!/((d_i-s)!(s-2)!)$, $i=1,\cdots,4$, $\sum$ stands for summing over all possible $s\otimes s\otimes s\otimes s$ mixed subtates $\rho_{s\otimes s\otimes s\otimes s}$.
\end{thm}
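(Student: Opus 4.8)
\noindent\emph{Proof strategy.}
The plan is to follow the scheme of the proof of Theorem~1: first establish the inequality for pure states by projecting onto all $s\otimes s\otimes s\otimes s$ substates and counting multiplicities, then extend it to mixed states by convexity together with the Minkowski inequality. So let $|\varphi\rangle=|\varphi\rangle_{d_1\otimes d_2\otimes d_3\otimes d_4}$ be a normalized pure state. For every choice of $s$-element subsets $S_m\subseteq\{1,\dots,d_m\}$, $m=1,\dots,4$, put $G_m=\sum_{i\in S_m}|i\rangle\langle i|$ and form the sub-normalized substate $|\varphi\rangle_{s\otimes s\otimes s\otimes s}=(G_1\otimes G_2\otimes G_3\otimes G_4)|\varphi\rangle$; its squared concurrence is given by the seven-term formula for $C^2(|\varphi\rangle_{d_1\otimes d_2\otimes d_3\otimes d_4})$ with all index summations restricted to $S_1,\dots,S_4$. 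Summing over all $\binom{d_1}{s}\binom{d_2}{s}\binom{d_3}{s}\binom{d_4}{s}$ substates and exchanging the order of summation yields $\sum C^2(|\varphi\rangle_{s\otimes s\otimes s\otimes s})=\frac14\sum N(i,p;j,q;k,t;r,h)\,\big(\text{seven terms}\big)$, where the outer sum runs over all index octuples and $N=\prod_{m=1}^4 c_m$ counts the substates containing the relevant pair of indices in each slot, with $c_m=\binom{d_m-1}{s-1}$ when the two indices in slot $m$ coincide and $c_m=\binom{d_m-2}{s-2}$ when they differ.

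The crux is to show $N\le K:=\binom{d_1-2}{s-2}\binom{d_2-2}{s-2}\binom{d_3-1}{s-1}\binom{d_4-1}{s-1}$ at every octuple at which the term in question does not vanish. I would use the fact that each of the seven terms is, up to sign, a bipartite weight $|a_\mu a_\nu-a_{\mu'}a_{\nu'}|^2$ attached to a bipartition $(\alpha,\bar\alpha)$ of $\{1,2,3,4\}$, obtained by swapping the $\alpha$-part of the index strings; such a weight is the modulus squared of a $2\times2$ minor and therefore vanishes unless the two composite indices on the $\alpha$-side are distinct \emph{and} the two on the $\bar\alpha$-side are distinct. Hence at least one slot lying in $\alpha$ and at least one slot lying in $\bar\alpha$ carry distinct indices, so at least two of the factors $c_m$ take the smaller value $\binom{d_m-2}{s-2}$. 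Writing $\binom{d_m-1}{s-1}=\frac{d_m-1}{s-1}\binom{d_m-2}{s-2}$ and using $d_1\le d_2\le d_3\le d_4$, the product $\prod c_m$ under this constraint is largest when the two forced reductions are placed in the two admissible slots of smallest dimension; a short inspection of the seven bipartitions then shows this maximum is always $\le K$, with equality exactly for the partitions $1|234$, $2|134$, $13|24$, $14|23$. Consequently $\sum C^2(|\varphi\rangle_{s\otimes s\otimes s\otimes s})\le K\,C^2(|\varphi\rangle)$, i.e.\ $C^2(|\varphi\rangle)\ge\frac1K\sum C^2(|\varphi\rangle_{s\otimes s\otimes s\otimes s})$.

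For a mixed state $\rho_{d_1\otimes d_2\otimes d_3\otimes d_4}=\sum_l p_l|\varphi_l\rangle\langle\varphi_l|$ attaining the convex roof, I would argue exactly as in Theorem~1: apply the pure-state bound to each $|\varphi_l\rangle$, use the Minkowski inequality $\big(\sum_j(\sum_l x_{lj})^2\big)^{1/2}\le\sum_l\big(\sum_j x_{lj}^2\big)^{1/2}$ with $x_{lj}=p_l\,C\big((G_1\otimes\cdots\otimes G_4)|\varphi_l\rangle\big)$ summed over the substate label $j$, and finally observe that for each fixed substate $\{p_l,(G_1\otimes\cdots\otimes G_4)|\varphi_l\rangle\}$ is a valid pure-state decomposition of the corresponding sub-normalized substate $\rho_{s\otimes s\otimes s\otimes s}$, so that $\sum_l p_l C(|\varphi_l\rangle_{s\otimes s\otimes s\otimes s})\ge C(\rho_{s\otimes s\otimes s\otimes s})$. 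Chaining these estimates yields $C^2(\rho_{d_1\otimes d_2\otimes d_3\otimes d_4})\ge\frac1K\sum C^2(\rho_{s\otimes s\otimes s\otimes s})$, which is the assertion.

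The main obstacle is the multiplicity bound $N\le K$. The non-obvious input is that a nonvanishing term of the seven-term formula forces \emph{two} coordinates, not one, to carry distinct indices — one on each side of the relevant bipartition; granting this, the rest is the elementary but slightly fussy optimization of $\prod c_m$ over the seven bipartitions under the ordering $d_1\le d_2\le d_3\le d_4$. A minor, routine, matter is to fix conventions so that the concurrence formula (1) is applied consistently to the sub-normalized substates produced by the projectors $G_m$.
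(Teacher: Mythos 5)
Your proposal is correct and follows essentially the same route as the paper: a multiplicity count of how many $s\otimes s\otimes s\otimes s$ substates contain each index octuple, the observation that a nonvanishing term forces distinct indices in at least two slots (so the count is bounded by $\binom{d_1-2}{s-2}\binom{d_2-2}{s-2}\binom{d_3-1}{s-1}\binom{d_4-1}{s-1}$ under $d_1\leq d_2\leq d_3\leq d_4$), and the standard convex-roof plus Minkowski argument for mixed states. The only difference is presentational: you organize the combinatorial step via the seven bipartitions and a single optimization of $\prod c_m$, whereas the paper enumerates the seven equality patterns explicitly for the representative term $1|234$ and invokes the corresponding binomial inequalities.
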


\begin{proof}
Consider the terms on the right hand side of Eq.(24):
\begin{eqnarray}
|a_{i_0j_0k_0r_0}a_{p_0q_0t_0h_0}-a_{p_0j_0k_0r_0}a_{i_0q_0t_0h_0}|^2,~~~ i_0\neq p_0.
\end{eqnarray}

When $j_0\neq q_0,k_0\neq t_0$ and $r_0\neq h_0$, there are 
${d_1-2\choose s-2} {d_2-2\choose s-2} {d_3-2\choose s-2} {d_4-2\choose s-2}$ different ${s\otimes s\otimes s\otimes s}$ substates. We have $|\varphi\rangle_{s\otimes s\otimes s\otimes s}=G_1\otimes G_2\otimes G_3\otimes G_4 |\varphi\rangle_{d_1\otimes d_2\otimes d_3\otimes d_4}$, where $G_1=|i_0\rangle\langle i_0|+|p_0\rangle\langle p_0|+\sum_{i=i_3}^{i_s}|i\rangle\langle i|$, $G_2=|j_0\rangle\langle j_0|+|q_0\rangle\langle q_0|+\sum_{j=j_3}^{j_s}|j\rangle\langle j|$, $G_3=|k_0\rangle\langle k_0|+|t_0\rangle\langle t_0|+\sum_{k=k_3}^{k_s}|k\rangle\langle k|$ and $G_4=|r_0\rangle\langle r_0|+|h_0\rangle\langle h_0|+\sum_{r=r_3}^{r_s}|r\rangle\langle r|$ with $i_s\leq d_1$, $j_s\leq d_2$, $k_s\leq d_3$ and $r_s\leq d_4$ respectively.

When $j_0\neq q_0,k_0= t_0$ and $r_0= h_0$, we have ${d_1-2\choose s-2} {d_2-2\choose s-2} {d_3-1\choose s-1} {d_4-1\choose s-1}$ different ${s\otimes s\otimes s\otimes s}$ substates. We have $|\varphi\rangle_{s\otimes s\otimes s\otimes s}=G_1\otimes G_2\otimes G_{3'}\otimes G_{4'}|\varphi\rangle_{d_1\otimes d_2\otimes d_3\otimes d_4}$, where $G_{3'}=|k_0\rangle\langle t_0|+\sum_{k=k_2}^{k_s}|k\rangle\langle k|$ and $G_{4'}=|r_0\rangle\langle h_0|+\sum_{r=r_2}^{r_s}|r\rangle\langle r|$.

When $j_0= q_0,k_0\neq t_0$ and $r_0= h_0$, there are ${d_1-2\choose s-2} {d_2-1\choose s-1} {d_3-2\choose s-2} {d_4-1\choose s-1}$ different ${s\otimes s\otimes s\otimes s}$ substates. We have $|\varphi\rangle_{s\otimes s\otimes s\otimes s}=G_1\otimes G_{2'}\otimes G_3\otimes G_{4'}|\varphi\rangle_{d_1\otimes d_2\otimes d_3\otimes d_4}$, where $G_{2'}=|j_0\rangle\langle q_0|+\sum_{j=j_2}^{j_s}|j\rangle\langle j|$.

When $j_0= q_0,k_0= t_0$ and $r_0\neq h_0$, there are ${d_1-2\choose s-2} {d_2-1\choose s-1} {d_3-1\choose s-1} {d_4-2\choose s-2}$ different ${s\otimes s\otimes s\otimes s}$ substates. We have $|\varphi\rangle_{s\otimes s\otimes s\otimes s}=G_1\otimes G_{2'}\otimes G_{3'}\otimes G_4|\varphi\rangle_{d_1\otimes d_2\otimes d_3\otimes d_4}$.

When $j_0\neq q_0,k_0\neq t_0$ and $r_0= h_0$, there are ${d_1-2\choose s-2} {d_2-2\choose s-2} {d_3-2\choose s-2} {d_4-1\choose s-1}$ different ${s\otimes s\otimes s\otimes s}$ substates. We have $|\varphi\rangle_{s\otimes s\otimes s\otimes s}=G_1\otimes G_2\otimes G_3\otimes G_{4'}|\varphi\rangle_{d_1\otimes d_2\otimes d_3\otimes d_4}$.

When $j_0\neq q_0,k_0= t_0$ and $r_0\neq h_o$, we have ${d_1-2\choose s-2} {d_2-2\choose s-2} {d_3-1\choose s-1} {d_4-2\choose s-2}$ different ${s\otimes s\otimes s\otimes s}$ substates. We have $|\varphi\rangle_{s\otimes s\otimes s\otimes s}=G_1\otimes G_2\otimes G_{3'}\otimes G_4|\varphi\rangle_{d_1\otimes d_2\otimes d_3\otimes d_4}$.

When $j_0= q_0,k_0\neq t_0$ and $r_0\neq h_0$, we can get ${d_1-2\choose s-2} {d_2-1\choose s-1} {d_3-2\choose s-2} {d_4-2\choose s-2}$ different ${s\otimes s\otimes s\otimes s}$ substates. We have $|\varphi\rangle_{s\otimes s\otimes s\otimes s}=G_1\otimes G_{2'}\otimes G_3\otimes G_4|\varphi\rangle_{d_1\otimes d_2\otimes d_3\otimes d_4}$.

Noting that ${d_4-2\choose s-2}\leq{d_4-1\choose s-1}$, ${d_2-2\choose s-2}\leq{d_2-1\choose s-1}$, ${d_3-1\choose s-1}{d_4-2\choose s-2}\leq{d_4-1\choose s-1}{d_3-2\choose s-2}$ and  ${d_2-1\choose s-1}{d_3-2\choose s-2}\leq{d_3-1\choose s-1}{d_2-2\choose s-2}$, we have
\begin{eqnarray}
{d_1-2\choose s-2} {d_2-2\choose s-2} {d_3-1\choose s-1} {d_4-1\choose s-1}C^2(|\varphi\rangle_{d_1\otimes d_2\otimes d_3\otimes d_4})\geq\sum C^2(|\varphi\rangle_{s\otimes s\otimes s\otimes s}).
\end{eqnarray}
Therefore we obtain that
\begin{eqnarray}
C^2(|\varphi\rangle_{d_1\otimes d_2\otimes d_3\otimes d_4})\geq \frac{1}{{d_1-2\choose s-2} {d_2-2\choose s-2} {d_3-1\choose s-1} {d_4-1\choose s-1}}\sum C^2(|\varphi\rangle_{s\otimes s\otimes s\otimes s}).
\end{eqnarray}

For the mixed state $\rho_{d_1\otimes d_2\otimes d_3\otimes d_4}$, we have
\begin{eqnarray}
\begin{split}
C(\rho_{d_1\otimes d_2\otimes d_3\otimes d_4})&=min\sum_ip_iC(|\varphi_i\rangle_{d_1\otimes d_2\otimes d_3\otimes d_4})\\
&\geq\frac{1}{\sqrt{{{d_1-2\choose s-2} {d_2-2\choose s-2} {d_3-1\choose s-1} {d_4-1\choose s-1}}}}min\sum_ip_i(\sum C^2(|\varphi_i\rangle_{s\otimes s\otimes s\otimes s\otimes s}))^\frac{1}{2}\\
&\geq\frac{1}{\sqrt{{{d_1-2\choose s-2} {d_2-2\choose s-2} {d_3-1\choose s-1} {d_4-1\choose s-1}}}}min[\sum(\sum_i p_i C(|\varphi_i\rangle_{s\otimes s\otimes s\otimes s\otimes s}))^2]^\frac{1}{2}\\
&\geq\frac{1}{\sqrt{{{d_1-2\choose s-2} {d_2-2\choose s-2} {d_3-1\choose s-1} {d_4-1\choose s-1}}}}[\sum(min\sum_i p_i C(|\varphi_i\rangle_{s\otimes s\otimes s\otimes s\otimes s}))^2]^\frac{1}{2}\\
&=\frac{1}{\sqrt{{{d_1-2\choose s-2} {d_2-2\choose s-2} {d_3-1\choose s-1} {d_4-1\choose s-1}}}}[\sum C^2(\rho_{s\otimes s\otimes s\otimes s\otimes s})]^\frac{1}{2},\\
\end{split}
\end{eqnarray}
where the minimum is taken over all possible pure state decompositions of the mixed state $\rho_{d_1\otimes d_2\otimes d_3\otimes d_4}$ in the first three minimizations, the minimum in the last inequality is taken over all pure state decompositions of $\rho_{s\otimes s\otimes s\otimes s}$, the first three $\sum$ stand for the summation over all possible $s\otimes s\otimes s\otimes s$ mixed subtates $|\varphi_i\rangle_{s\otimes s\otimes s\otimes s}$  and the last $\sum$ stands for summing over all possible $s\otimes s\otimes s\otimes s$ pure subtates $\rho_{s\otimes s\otimes s\otimes s}$. The Minkowski inequality $(\sum_j(\sum_ix_{ij})^2)^\frac{1}{2}\leq\sum_i(\sum_j x_{ij}^2)^\frac{1}{2}$ has been used in the second inequality.
\end{proof}
{\bf Remark 2.} In Theorem 4, we derive the lower bound of concurrence for four-partite quantum systems, thus Theorem 4 is a generalization of the Theorem 1 given in [18]. Theorem 2 in [14] gives the lower bound of concurrence of $m\otimes n\otimes l$ tripartite quantum states, i.e, $C^2(\rho)\geq[{m-2\choose s-2}{n-2\choose s-2}{l-1\choose s-1}]^{-1}\sum C^2(\rho_{s\otimes s\otimes s})$. Thus, Theorem 4 is also a generalization of the Theorem 2 in [14].

{\it Example 2}. Let us consider the $2\otimes2\otimes2\otimes3$ state
\begin{eqnarray}
\rho_{1234}=\frac{1-x}{16}I_{16}+x|\psi\rangle\langle \psi|,
\end{eqnarray}
where $|\psi\rangle=\frac{1}{2}(|0000\rangle+|0012\rangle+|1100\rangle+|1112\rangle)$ and $0\leq x\leq1$. By Theorem 4, the lower bound of concurrence is
$C(\rho_{1234})\geq\ \frac{\sqrt{7}}{4}\frac{\sqrt{x+2x\sqrt{3x+1}+4x^2+2}-\sqrt{x-2x\sqrt{3x+1}+4x^2+2}}{5+3x}$. From the lower bound in [23], one has $C(\rho_{1234})\geq\frac{3x-1}{2}$.
From Fig. 1, our bound is better than that of [23] for $\frac{1}{3}\leq x \leq 0.4$, showing that our bound from Theorem 4 provides a better estimation of concurrence than that of [23].

\begin{figure}[!htb]
\centerline{\includegraphics[width=0.6\textwidth]{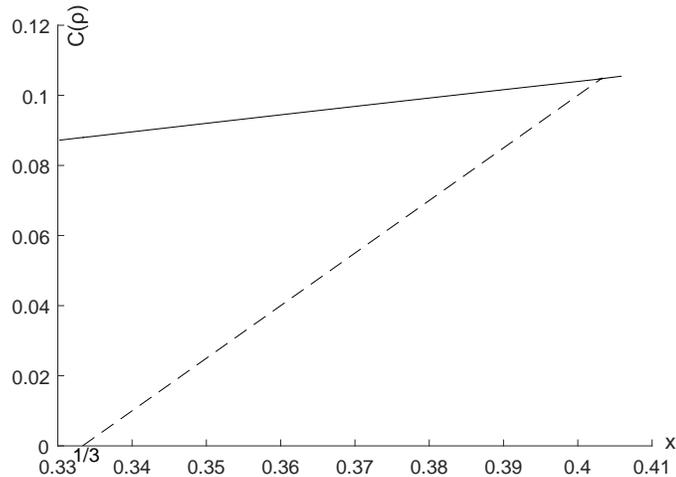}}
\renewcommand{\figurename}{Fig.}
\caption{Lower bounds of concurrence from Theorem 4 (solid curve) and from [23] (dashed line).}
\end{figure}

Choosing different subspace dimensions $s$ may give rise to different lower bounds. We can obtain a new lower bound by convex combination of these lower bounds.

{\noindent\bf Corollary 1} For a four-partite mixed quantum state $\rho_{d_1\otimes d_2\otimes d_3\otimes d_4}\in H_1\otimes H_2\otimes H_3\otimes H_4$, $d_1\leq d_2\leq d_3\leq d_4$, the concurrence is bounded by
\begin{eqnarray}
C^2(\rho_{d_1\otimes d_2\otimes d_3\otimes d_4})\geq\sum \sum_{s=2}^m \frac{p_s}{{d_1-2\choose s-2} {d_2-2\choose s-2} {d_3-1\choose s-1} {d_4-1\choose s-1}}C^2(\rho_{s\otimes s\otimes s\otimes s}),
\end{eqnarray}
where $0\leq p_s\leq1,s=2,\cdots,m$, $\sum_{s=2}^mp_s=1$, ${d_i-2\choose s-2}=(d_i-2)!/((d_i-s)!(s-2)!)$, $i=1,\cdots,4$, and $\sum$
sums over 
all possible $s\otimes s\otimes s\otimes s$ mixed subtates $\rho_{s\otimes s\otimes s\otimes s}$.

{\bf Remark 3.} Our above approach can be generalized to multipartite quantum systems, by taking into account the terms on the right hand side of Eq. (20),
$|a_{{r_1}'{r_2}'\cdots {r_N}'}a_{{h_1}'{h_2}'\cdots {h_N}'}-a_{{h_1}'{r_2}'\cdots {r_N}'}a_{{r_1}'{h_2}'\cdots {h_N}'}|^2$, ${r_1}'\neq {h_1}'$.
Similar analysis provides lower bounds of concurrence for multipartite quantum states.

\section{Conclusion}

We derived lower bounds of concurrence for tripartite mixed quantum states $\rho_{d_1\otimes d_2\otimes d_3}$ by projecting to three-qubit quantum states using the monogamy property of concurrence.
The results are generalized to multipartite quantum systems.
Moreover, by analyzing the concurrence of a pure four-partite quantum state $|\varphi\rangle_{d_1\otimes d_2\otimes d_3\otimes d_4}$,
we have projected a high-dimensional four-partite quantum state to lower $s$-dimensional systems, and lower bounds of concurrence for any four-partite quantum mixed states
are obtained. By detailed examples we have shown that these bounds are better than other bounds given in the literature.

\textbf {Acknowledgements}

This work is supported by the National Natural Science Foundation of China under grant Nos. 11101017, 11531004, 11726016 and 11675113,
and Simons Foundation under grant No. 523868, Key Project of Beijing Municipal Commission of Education (KZ201810028042).

\end{document}